\documentclass[10pt,a4paper]{amsart}

\usepackage{amssymb}
\usepackage{amsmath}
\usepackage{verbatim}
\usepackage{enumerate}
\usepackage{color}
\usepackage{graphicx}
\usepackage{xcolor}
\usepackage{amscd}
\usepackage{amsthm}
\usepackage{epsfig}
\usepackage{eulervm}
\usepackage{subfigure}
\usepackage{extarrows}
\usepackage[
    colorlinks=true,
    linkcolor=black,
    urlcolor=blue,
    citecolor=blue
]{hyperref}
\usepackage{tikz}
\usepackage{braket}
\usepackage{ytableau}
\usepackage[numbers]{natbib}
\usetikzlibrary{decorations.pathreplacing}
\usetikzlibrary{calc}

\theoremstyle{definition}
\newtheorem{remark}{Remark}[section]
\newtheorem{example}[remark]{Example}
\newtheorem{lemma}[remark]{Lemma}
\newtheorem{proposition}[remark]{Proposition}
\newtheorem{theorem}[remark]{Theorem}
\newtheorem{definition}[remark]{Definition}
\newtheorem{corollary}[remark]{Corollary}

\begin{document}
%\author{Maxim Gritskov and Savelii Timchenko}
\title[ 
Perturbative anomalies in quantum mechanics
]{Perturbative anomalies in quantum mechanics}

\begin{abstract}
In this work, we propose a cohomological approach to studying perturbative anomalies in quantum mechanics. The Hamiltonian $\hat{H}$ together with the symmetry generator $\hat{S}$ forms a unitary representation of the two-dimensional Abelian Lie algebra $\mathfrak{g}\cong \mathbb{R}^{2}$ on the Hilbert space $V$. We show that perturbations of such a system are related to the first Chevalley-Eilenberg cohomology group $H^{1}_{CE}(\mathbb{R}^{2},\mathfrak{u}(V))$. In turn, the perturbative anomalies of the symmetry $\hat{S}$ are related to the second cohomology group $H^{2}_{CE}(\mathbb{R}^{2},\mathfrak{u}(V))$. 
\end{abstract}

\author{Maxim Gritskov}
\address{Skolkovo Institute of Science and Technology, 121205, Moscow, Russia}
\address{Saint Petersburg State University,
Universitetskaya nab. 7/9, 199034 St. Petersburg, Russia}
\email{m.gritskov@spbu.ru}

\author{Andrey Losev}
\address{Shanghai Institute for Mathematics and Interdisciplinary Sciences, Building 3, 62 Weicheng Road, Yangpu District, 200433, Shanghai, China}
\email{aslosev2@yandex.ru}

\author{Saveliy Timchenko}
\address{Faculty of Physics, National Research University Higher School of Economics, Staraya Basmannaya ul. 21/4s1, 105066, Moscow, Russia}
\email{saul.timchenko@gmail.com}

\maketitle

\setcounter{tocdepth}{3} 
\tableofcontents

\allowdisplaybreaks

\section{Introduction: Lie algebra cohomological program}
In this article, we present our initial arguments in favor of the idea that symmetry anomalies in physics are essentially deformation obstructions. One of the classic questions in quantum field theory is what happens to symmetry when the theory is deformed. It turns out that when studying the question of perturbative symmetry breaking, it is sufficient to study the representation of the symmetry algebra on the state space of the initial theory. In this case, the appropriate language for describing deformations of a theory with symmetry is the language of homological algebra. Namely, let us consider a theory with Hamiltonian $H$ and symmetry algebra generated by $S_{\alpha}$ such that $[S_{\alpha},S_{\beta}]=f_{\alpha\beta}^{\gamma}S_{\gamma}$. Then we define the so-called \emph{Chevalley-Elienberg} differential for the quantum system $(H,S_{\alpha})$ \cite{Losev:2023exl}:
\begin{equation}
    d_{CE}=c^{H}H+c^{\alpha}S_{\alpha}-\frac{1}{2}f_{\alpha\beta}^{\gamma}c^{\alpha}c^{\beta}\frac{\partial}{\partial c^{\gamma}}.
\end{equation}
The space of infinitesimal deformations of theories with symmetries is the first cohomology group $H^{1}_{CE}$ of the Chevalley-Eilenberg algebra of this symmetry.

However, not every infinitesimal deformation defines a curve in the space of theories passing through a given point. The first obstruction is the map 
\begin{equation}
    \mu_{2}: H^{1}_{CE}\otimes H^{1}_{CE}\rightarrow H^{2}_{CE}
\end{equation}
In each order of perturbation theory, obstructions $\mu_{n}: (H^{1}_{CE})^{\otimes n} \rightarrow H^{2}_{CE}$ arise. The maps $\mu_{n}$ form a structure of the $L_{\infty}$-algebra on $H^{\bullet}_{CE}$ and, in particular, satisfy the quadratic relations \cite{kontsevich2000deformation}. In particular, if $S$ satisfies the dilation of space, then $\mu_{n}$ can be interpreted as coefficients of the \emph{beta function} and therefore must satisfy quadratic relations coming from the $L_{\infty}$-structure \cite{Losev:2005pu, Gamayun:2023sif}.

In this text, we implement the program described above using a minimal example which we analyze in detail. 

\section{Perturbative anomalies as the cohomological obstructions}
\subsection{Motivation}
Consider a quantum mechanical system with state space $V$, Hamiltonian $\hat{H}$, and symmetry $\hat{S}$. By symmetry $\hat{S}$, we mean a self-adjoint operator that commutes with the Hamiltonian. For simplicity, we will assume that $\hat{H}$ and $\hat{S}$ have discrete spectra, i.e., that $V$ can be decomposed into a direct sum (possibly infinite) of subspaces that are eigenspaces for both $\hat{H}$ and $\hat{S}$ simultaneously.

Now we will perturb the Hamiltonian, adding a new term $\hat{H}\rightarrow \hat{H}+t\,\delta^{(1)}\hat{H}$. Generally speaking, now $[\hat{S}, \hat{H}+t\,\delta^{(1)}\hat{H}]\neq 0$ and it would seem that the symmetry is broken. But is it possible to deform the symmetry generator $\hat{S}\rightarrow \hat{S}+t\,\delta^{(1)}\hat{S}$ so that it becomes the symmetry of the perturbed problem in first order in $t$? The corresponding correction $\delta^{(1)}\hat{S}$ to the generator $\hat{S}$ satisfies the equation:
\begin{equation} \label{eq:linear-condition}
        [\hat{H},\delta^{(1)}\hat{S}]=[\hat{S},\delta^{(1)}\hat{H}].
    \end{equation}

Suppose that we have managed to restore symmetry in the first order of perturbation theory by selecting the appropriate correction $\delta^{(1)}\hat{S}$. But what about the next orders of perturbation theory? Is it possible to deform the system $(\hat{H},\hat{S})$ along the direction $(\delta^{(1)}\hat{H},\delta^{(1)}\hat{S})$ in all orders of perturbation theory? Formally, the problem can be stated as finding a curve $(\hat{H}(t),\hat{S}(t))$ in the space of theories passing at time $t=0$ through a given point $(\hat{H},\hat{S})$ with velocity $(\delta^{(1)}\hat{H},\delta^{(1)}\hat{S})$.

The perturbative solution to the problem consists in finding two formal series
    \begin{eqnarray}
            \notag
            \hat{H}(t)=\hat{H}+t\,\delta^{(1)}\hat{H}+\sum_{n\geq 2}t^{n}\delta^{(n)}\hat{H},\\
            \hat{S}(t)=\hat{S}+t\,\delta^{(1)}\hat{S}+\sum_{n\geq 2}t^{n}\delta^{(n)}\hat{S}
    \end{eqnarray}
    such that $[\hat{H}(t),\hat{S}(t)]=0$.
 In the second order, condition $[\hat{H}(t),\hat{S}(t)]=0$ requires:
    \begin{equation} \label{eq:quadratic-condition}
        [\hat{H},\delta^{(2)}\hat{S}]+[\delta^{(2)}\hat{H},\hat{S}]+[\delta^{(1)}\hat{H},\delta^{(1)}\hat{S}]=0.
    \end{equation}
    It turns out that this equation is not always solvable. Moreover, it turns out that if the equation \eqref{eq:quadratic-condition} is solvable, then the equations in all higher orders of perturbation theory will also be solvable. To demonstrate this, we will consider this problem from the perspective of a cohomological approach to deformations.

    For this approach, it is more natural to study not the deformation of the two Hermitian operators system, but the deformation of the two anti-Hermitian operators system, since the space of anti-Hermitian operators is closed with respect to the commutator. Thus, we define the anti-Hermitian operators $H=\mathrm{i}\hat{H}$ and $S=\mathrm{i}\hat{S}$, which realize the unitary \emph{representation} of the abelian Lie algebra $\mathfrak{g}\cong \mathbb{R}^{2}$ in the Hilbert space $V$. The deformations of such a system are essentially deformations of this representation, which is described by the Chevalley-Elienberg complex \cite{Chevalley:1948}.
\subsection{Deformation of the representation and the CE complex}
Let $\mathfrak{g}$ be a Lie algebra with representation $\rho: \mathfrak{g}\rightarrow \mathrm{End}(V)$ in the complex vector space $V$.
\begin{definition}
    Consider the basis $e_i$ in $\mathfrak{g}$, dual basis $c^i$, and the structure constants $f_{ij}^k$. Then the CE complex of $\mathfrak{g}$ with coefficients in $V$ is given by
    \begin{equation}
        \label{CE_complex}
        CE^{\bullet}(\mathfrak{g},V) = S^{\bullet}\mathfrak{g}^*[-1]\otimes V,
    \end{equation}
    with the differential
    \begin{equation}
    \label{CE_diff}
        d =c^i\rho(e_i)-\frac{1}{2}f_{ij}^k c^i c^j\frac{\partial}{\partial c^k}.
    \end{equation}
    Hereafter, the summation convention for repeated indices is assumed.
\end{definition}
\begin{definition}
    Along with the complex \eqref{CE_complex}, we will introduce the Chevalley-Eilenberg complex with values in the representation $\mathrm{End}(V)$, whose differential is defined as the adjoint differential $\{d,\cdot \}$.
\end{definition}
\begin{proposition} \label{prop:first-h}
    The first cohomology group $H^1(\mathfrak{g},\mathrm{End}(V))$ corresponds to the nontrivial infinitesimal deformations of a Lie algebra representation $\rho$.
\end{proposition}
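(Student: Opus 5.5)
The plan is to make the correspondence explicit at the level of cochains. A cochain of degree one in $CE^{\bullet}(\mathfrak{g},\mathrm{End}(V))$ has the form $\phi = c^i \phi_i$ with $\phi_i\in\mathrm{End}(V)$, i.e.\ a linear map $\phi:\mathfrak{g}\to\mathrm{End}(V)$, $e_i\mapsto\phi_i$. An infinitesimal deformation of $\rho$ is a family $\rho_t=\rho+t\phi+O(t^2)$ that remains a representation to first order in $t$. Infinitesimal deformations are declared trivial if they are induced by conjugation $\rho_t=e^{tA}\rho\, e^{-tA}$ with $A\in\mathrm{End}(V)$. I will show two things: the first-order condition on $\phi$ is exactly $\{d,\phi\}=0$, and the trivial deformations are exactly the coboundaries $\{d,A\}$ of $0$-cochains $A$. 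The quotient is then $H^1(\mathfrak{g},\mathrm{End}(V))$.

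\textbf{Step 1.} First I compute the adjoint differential on low degrees. For a $0$-cochain $A\in\mathrm{End}(V)$, the operator $\partial/\partial c^k$ annihilates $A$, so
\[
\{d,A\}=[c^i\rho(e_i),A]=c^i[\rho(e_i),A].
\]
For a $1$-cochain $\phi=c^k\phi_k$, the anticommutator (graded commutator) with $c^i\rho(e_i)$ gives $c^ic^j[\rho(e_i),\phi_j]$, using that the $c$'s anticommute. The term $-\tfrac12 f_{ij}^k c^ic^j\partial_{c^k}$ contributes $-\tfrac12 f_{ij}^kc^ic^j\phi_k$, with sign conventions to be fixed carefully. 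Antisymmetrizing, the component along $c^ic^j$ of $\{d,\phi\}$ is
\[
[\rho(e_i),\phi_j]-[\rho(e_j),\phi_i]-f_{ij}^k\phi_k .
\]
Up to an overall normalization, this is the standard CE differential $(d\phi)(x,y)=[\rho(x),\phi(y)]-[\rho(y),\phi(x)]-\phi([x,y])$.

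\textbf{Step 2.} Next I expand the representation condition
\[
[\rho_t(e_i),\rho_t(e_j)]=f_{ij}^k\rho_t(e_k)
\]
to first order in $t$. This gives
\[
[\rho(e_i),\phi_j]+[\phi_i,\rho(e_j)]=f_{ij}^k\phi_k,
\]
which by Step 1 is exactly $\{d,\phi\}=0$. So infinitesimal deformations are precisely the $1$-cocycles. Conversely, any cocycle $\phi$ defines $\rho+t\phi$, which is a representation modulo $t^2$.

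\textbf{Step 3.} Conjugating by $e^{tA}$ gives
\[
\rho_t(e_i)=\rho(e_i)+t[A,\rho(e_i)]+O(t^2),
\]
so $\phi_i=-[\rho(e_i),A]$, i.e.\ $\phi=-\{d,A\}$, which is a coboundary. Conversely, every coboundary arises this way. Hence equivalence classes of infinitesimal deformations modulo trivial ones are in bijection with $H^1$. In the unitary setting relevant to the paper, one restricts to $\mathfrak{u}(V)$ and anti-Hermitian $A$; the argument is identical.

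The main obstacle is bookkeeping in Step 1: fixing the sign conventions of the graded commutator $\{d,\cdot\}$ with the odd generators $c^i$ so that the $f_{ij}^k$ term matches $\phi([e_i,e_j])$ with the correct sign and the factor $\tfrac12$ cancels against antisymmetrization. I will check this first on the abelian case $\mathfrak{g}\cong\mathbb{R}^2$. There the cocycle condition reduces to $[H,\delta S]=[S,\delta H]$, i.e.\ \eqref{eq:linear-condition}, which serves as a consistency test of the signs.
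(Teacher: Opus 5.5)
The paper gives no proof of this proposition. It is stated as standard background and only illustrated by the $\mathfrak{sl}(2,\mathbb{C})$ Verma-module example and the first-order equation \eqref{eq:linear-condition}. Your argument is the standard one, and it is correct.

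The sign issue you flag in Step 1 resolves with no extra normalization. Since $\partial/\partial c^l$ is an odd derivation, $\{c^ic^j\,\partial/\partial c^l,\,c^k\}=\delta^k_l\,c^ic^j$. Hence $\{d,c^k\phi_k\}=c^ic^j\bigl([\rho(e_i),\phi_j]-\tfrac12 f_{ij}^k\phi_k\bigr)$. The coefficient of $c^ic^j$ for $i<j$ is then exactly $[\rho(e_i),\phi_j]-[\rho(e_j),\phi_i]-f_{ij}^k\phi_k$, which is the linearized relation of your Step 2. In the abelian case this becomes $[H,\phi_S]=[S,\phi_H]$, matching the formula used in the proof of Lemma~\ref{lemma:decomposition}.

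A shorter route, closer to how the paper builds $\rho$ into $d$, starts from $d^2=\tfrac12 c^ic^j\bigl([\rho(e_i),\rho(e_j)]-f_{ij}^k\rho(e_k)\bigr)$. So $\rho$ is a representation iff $d^2=0$. Replacing $\rho$ by $\rho+t\phi$ replaces $d$ by $d+t\phi$.
\begin{itemize}
\item $(d+t\phi)^2=t\{d,\phi\}+t^2\phi^2$ gives the cocycle condition at order $t$, with signs automatically consistent.
\item $e^{tA}de^{-tA}=d-t\{d,A\}+O(t^2)$ gives your Step 3 in one line.
\item Adding $t^2\phi^{(2)}$, the order-$t^2$ condition is $\{d,\phi^{(2)}\}+\phi^2=0$, with $\phi^2=\tfrac12 c^ic^j[\phi_i,\phi_j]$. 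In the abelian case this is exactly \eqref{eq:quadratic-condition}, and the class of the $2$-cocycle $\phi^2$ is the obstruction of Proposition~\ref{prop:second-h}.
\end{itemize}
Your componentwise computation has the advantage of matching the classical formula $(d\phi)(x,y)=[\rho(x),\phi(y)]-[\rho(y),\phi(x)]-\phi([x,y])$ directly. The operator route buys brevity and the link to the higher-order obstructions the paper is after.
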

\begin{proposition} \label{prop:second-h}
    The second cohomology group $H^2(\mathfrak{g},\mathrm{End}(V))$ corresponds to the obstructions to  infinitesimal deformations of a Lie algebra representation $\rho$.
\end{proposition}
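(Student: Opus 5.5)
We will make Proposition~\ref{prop:second-h} precise in the standard Gerstenhaber style: given a formal deformation $\rho_t=\rho+\sum_{n\ge1}t^n\rho_n$, the obstruction to extending it from order $n-1$ to order $n$ is a $2$-cochain, and we show that it is a cocycle whose cohomology class in $H^2(\mathfrak{g},\mathrm{End}(V))$ vanishes exactly when the extension exists. We package the data as in the CE formalism. Let $\Phi_n=c^i\rho_n(e_i)$, a degree-one element of $CE^1(\mathfrak{g},\mathrm{End}(V))$, and let $D=d+\sum_{n\ge1}t^n\Phi_n$ be the deformed differential on $CE^\bullet(\mathfrak{g},V)[[t]]$. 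The condition that $\rho_t$ be a representation, $[\rho_t(e_i),\rho_t(e_j)]=f_{ij}^k\rho_t(e_k)$, is equivalent to $D^2=0$. This is the standard computation $d^2=0\iff\rho$ is a representation, applied to $\rho_t$.

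Expanding $D^2=\tfrac12\{D,D\}$ in powers of $t$ and using $d^2=0$, the coefficient of $t^n$ is
\begin{equation*}
\{d,\Phi_n\}+\tfrac12\sum_{a+b=n,\ a,b\ge1}\{\Phi_a,\Phi_b\}=0 .
\end{equation*}
For $n=1$ this is the cocycle condition $\{d,\Phi_1\}=0$. This recovers Proposition~\ref{prop:first-h}; one also checks that reparametrizations by $\exp(tX)$ with $X\in\mathrm{End}(V)$ change $\Phi_1$ by $\{d,X\}$. Suppose now that $\Phi_1,\dots,\Phi_{n-1}$ solve the equations up to order $n-1$. Set
\begin{equation*}
\Omega_n=-\tfrac12\sum_{a+b=n}\{\Phi_a,\Phi_b\}\in CE^2(\mathfrak{g},\mathrm{End}(V)).
\end{equation*}
Then the order-$n$ equation reads $\{d,\Phi_n\}=\Omega_n$. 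This is solvable if and only if $\Omega_n$ is exact. Hence, once we know $\Omega_n$ is closed, its class $[\Omega_n]\in H^2$ is precisely the obstruction, and every obstruction lives in $H^2$. For $n=2$ this is $-\tfrac12\{\Phi_1,\Phi_1\}$, which gives the map $\mu_2$ of the introduction.

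The key step, and the main obstacle, is showing $\{d,\Omega_n\}=0$. We plan to use the graded Jacobi identity for the supercommutator in the dg Lie algebra $CE^\bullet(\mathfrak{g},\mathrm{End}(V))$, together with the fact that $\{d,\cdot\}$ is a derivation of the bracket. Applying $\{d,\cdot\}$ to $\Omega_n$, we substitute the lower-order equations $\{d,\Phi_a\}=-\tfrac12\sum_{p+q=a}\{\Phi_p,\Phi_q\}$. The result is a sum of terms $\{\{\Phi_p,\Phi_q\},\Phi_b\}$ with $p+q+b=n$, which cancels by graded Jacobi; for odd elements this is $\{\{\Phi,\Phi\},\Phi\}=0$ summed symmetrically. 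A cleaner route is to note that $D_{<n}=d+\sum_{k<n}t^k\Phi_k$ satisfies $D_{<n}^2=t^n\,(-\Omega_n)+O(t^{n+1})$ and that $\{D_{<n},D_{<n}^2\}=0$ identically. Taking the $t^n$ coefficient then gives $\{d,\Omega_n\}=0$ at once, and we will present this version. We must take care with signs, since the $c^i$ are odd and the $\Phi$'s are odd, so the brackets are anticommutators. With that in hand, the proposition follows. In the paper's abelian case $\mathfrak{g}\cong\mathbb{R}^2$ with $\rho_1=(\delta^{(1)}H,\delta^{(1)}S)$, $\Omega_2$ is proportional to $c^Hc^S[\delta^{(1)}H,\delta^{(1)}S]$, matching \eqref{eq:quadratic-condition}.
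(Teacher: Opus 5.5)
The paper gives no proof of this proposition. It is stated, together with Proposition~\ref{prop:first-h}, as a standard deformation-theoretic fact, and the only computation behind it is the second-order condition \eqref{eq:quadratic-condition} in the abelian case.

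Your argument is the standard Maurer--Cartan (Gerstenhaber) one, and it is correct. One has $D^2=\tfrac12c^ic^j\bigl([\rho_t(e_i),\rho_t(e_j)]-f_{ij}^k\rho_t(e_k)\bigr)$. The $t^n$-coefficient of $D^2=0$ gives $\{d,\Phi_n\}=\Omega_n$, and the $t^n$-coefficient of $[D_{<n},D_{<n}^2]=0$ gives $\{d,\Omega_n\}=0$.

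Compared with the paper's bare assertion, your version buys three things:
\begin{enumerate}
\item It makes ``corresponds'' precise: the class $[\Omega_n]\in H^2$ vanishes if and only if the given $(n-1)$-jet extends.
\item It works at every order, not only $n=2$.
\item It supplies the closedness of the obstruction cochain, which the paper's later discussion of the maps $\mu_n$ tacitly relies on.
\end{enumerate}
Your $n=2$ case, $\Omega_2=-c^Hc^S[\delta^{(1)}H,\delta^{(1)}S]$, reproduces \eqref{eq:quadratic-condition} exactly.

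When you write it up, make three small points explicit.
\begin{enumerate}
\item The bracket in $\{D_{<n},D_{<n}^2\}$ must be the graded commutator. For odd $D_{<n}$ and even $D_{<n}^2$ this is the ordinary commutator, $D_{<n}^3-D_{<n}^3=0$. Read as an anticommutator, as your remark that ``the brackets are anticommutators'' might suggest, it would be $2D_{<n}^3$, which need not vanish.
\item You identify $\{d,\cdot\}$ on left-multiplication operators $c^IX$ with the Chevalley--Eilenberg differential of $\mathrm{End}(V)$. This needs a one-line check: for $Q=-\tfrac12 f_{ij}^kc^ic^j\,\partial/\partial c^k$, the graded commutator of $Q$ with left multiplication by $c^IX$ is left multiplication by $Q(c^I)\,X$. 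Hence these operators form a sub-dg Lie algebra, which matches the paper's definition of the complex with values in $\mathrm{End}(V)$ via the adjoint differential.
\item The class $[\Omega_n]$ depends on the chosen $\Phi_2,\dots,\Phi_{n-1}$. So for $n\ge3$ a nonzero class obstructs only that particular jet, not the first-order direction $\Phi_1$ itself.
\end{enumerate}
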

\begin{example}
    Consider the Lie algebra $\mathfrak{g}=\mathfrak{sl}(2,\mathbb{C})$ with the standard basis elements $e,h,f$ satisfying the following commutation relations:
    \begin{equation}
        [h,e]=2e;\ \ \ [h,f]=-2f,\ \ \ [e,f]=h.
    \end{equation}
    Let $V=\mathbb{C}[x]$ be the space of polynomials in a variable $x$. We consider the representation $\rho:\mathfrak{g}\rightarrow \mathrm{End}(V)$ corresponding to the Verma module with the highest weight $
    \lambda \in \mathbb{C}$. The action of the basis elements is given by the differential operators:
    \begin{equation}
            \rho(e)=\partial_x;\ \ \ 
            \rho(h)=-2x\partial_x+\lambda;\ \ \ 
            \rho(f)=-x^2\partial_x+\lambda x.
    \end{equation}
    For this representation, the Chevalley-Eilenberg differential $d$ can be written as
    \begin{equation}
        \begin{aligned}
            d=c^h\partial_x+c^e(-2x\partial_x+\lambda)+c^f(-x^2\partial_x+\lambda x)\,-\\
            -\,2c^h c^e \frac{\partial}{\partial c^e}+2 c^h c^f \frac{\partial}{\partial c^f}-c^e c^f \frac{\partial}{\partial c^h}.
        \end{aligned}
    \end{equation}
    We analyze an infinitesimal deformation $\rho_t=\rho+t\,\delta\rho$ of this representation:
    \begin{equation}
            \delta\rho(e)=0;\ \ \ \delta\rho(h)=1;\ \ \ \delta\rho(f)=x.
    \end{equation}
    According to Proposition~\ref{prop:first-h}, for $\delta\rho$ to define an infinitesimal deformation, it must correspond to a $1$-cocycle $c^h\delta\rho(h)+c^e\delta\rho(e)+c^f\delta\rho(f)$ in the CE complex:
    \begin{equation}
        d(c^h\delta\rho(h)+c^e\delta\rho(e)+c^f\delta\rho(f))=0.
    \end{equation}
    We verify this condition by performing an explicit calculation:
    \begin{equation}
        d(c^h+c^f x)=-c^e c^f-2c^h c^f[x\partial_x,x]+c^ec^f[\partial_x,x]+2c^hc^fx=0.
    \end{equation}
    Since this deformation corresponds to the shift of the highest weight by $t$, it is not obstructed and defined for any finite $t$, not just infinitesimally.
\end{example}
\subsection{Application to quantum mechanics}
Let $\mathfrak{g}\cong\mathbb{R}^2$ be the two-dimensional abelian Lie algebra with basis $e_1,e_2$. Let $\rho:\mathfrak{g}\rightarrow \mathfrak{u}(V)$ be a unitary representation on a vector space $V$. We introduce the notation $H=\rho(e_1)$ and $S=\rho(e_2)$. Since the representation $\rho$ is unitary, $H$ and $S$ are commuting anti-Hermitian operators.

We construct the Chevalley-Eilenberg complex with coefficients in $\mathfrak{u}(V)$. Let $c^H,c^S$ be the dual basis elements generating the exterior algebra $S^{\bullet}\mathfrak{g}^*[-1]$. Then
\begin{equation}
\label{CE}
    0\rightarrow \mathfrak{u}(V)\xlongrightarrow{d}(c^H\oplus c^S)\otimes \mathfrak{u}(V) \xlongrightarrow{d} c^H c^S\otimes \mathfrak{u}(V)\rightarrow 0,
\end{equation}
where the differential $d$ is the CE differential \eqref{CE_diff} for the abelian Lie algebra:
\begin{equation}
\label{CE_diffab}
    d=c^{H}\cdot\mathrm{ad}_{H}+c^{S}\cdot\mathrm{ad}_{S}.
\end{equation}

Consider the spectral decomposition of $V$
\begin{equation}
    V=\bigoplus_{(a,\alpha)}V_{(a,\alpha)},
\end{equation}
 where for $v\in V_{(a,\alpha)}$, $Hv=\mathrm{i}\lambda_a v$ and $Sv=\mathrm{i}\mu_\alpha v$ with $\lambda_a,\mu_\alpha\in \mathbb{R}$. The subspaces $V_{(a,\alpha)}$ are pairwise orthogonal since they are eigenspaces of anti-Hermitian operators. We denote the orthogonal projector onto space $V_{(a,\alpha)}$ by $\Pi_{V_{(a,\alpha)}}$.
\begin{lemma} \label{lemma:decomposition}
    The Lie algebra $\mathfrak{u}(V)$ decomposes into a direct sum of vector spaces
    \begin{equation}
    \label{grading}
        \mathfrak{u}(V)=\bigoplus_{(a,\alpha),(b,\beta)}\mathcal{B}_{(a,\alpha),(b,\beta)}
    \end{equation}
    where $\mathcal{B}_{(a,\alpha),(b,\beta)}$ is the image of the linear operator $\pi_{(a,\alpha)}^{(b,\beta)}$:
    \begin{equation}
        \pi_{(a,\alpha)}^{(b,\beta)}(x)=\frac{1}{2}\cdot \Pi_{V_{(a,\alpha)}}x\,\Pi_{V_{(b,\beta)}}+\frac{1}{2}\cdot \Pi_{V_{(b,\beta)}}x\,\Pi_{V_{(a,\alpha)}}.
    \end{equation}
    Then, the complex $\eqref{CE}$ decomposes into a direct sum of the complexes $\mathcal{B}_{(a,\alpha),(b,\beta)}^{\bullet}$:
    \begin{equation}
    \label{CE_decomp}
        0\rightarrow \mathcal{B}_{(a,\alpha),(b,\beta)}\xlongrightarrow{d}(c^H\oplus c^S)\otimes \mathcal{B}_{(a,\alpha),(b,\beta)} \xlongrightarrow{d} c^H c^S\otimes \mathcal{B}_{(a,\alpha),(b,\beta)}\rightarrow 0.
    \end{equation}
    \begin{proof}
    Consider the action of the differential \eqref{CE_diffab} on the general element of \eqref{CE_decomp}:
        \begin{eqnarray}
            \notag d(w+c^{H}x+c^{S}y+c^{H}c^{S}z)=[c^H H+c^S S, w]+\{c^H H+c^S S,c^{H}x+c^{S}y\}=\\=c^{H}\cdot [H,w]+c^{S}\cdot [S,w]+c^{H}c^{S}\cdot[H,y]-c^{H}c^{S}\cdot[S,x]\,.
        \end{eqnarray}
    Then it is sufficient to show that for any element $\omega\in\mathcal{B}_{(a,\alpha),(b,\beta)}$, it is true that 
        \begin{equation}
            [H,\omega]\in\mathcal{B}_{(a,\alpha),(b,\beta)},\, [S,\omega]\in\mathcal{B}_{(a,\alpha),(b,\beta)}.
        \end{equation}
    We will prove this for $[H,\omega]$. Let there exist such an $\tilde{\omega}$ that $\omega=\pi_{(a,\alpha)}^{(b,\beta)}(\tilde{\omega})$, then
        \begin{equation}
            [H,\omega]=\frac{\mathrm{i}\lambda_{ab}}{2}\cdot \Pi_{V_{(a,\alpha)}}\tilde{\omega}\,\Pi_{V_{(b,\beta)}}-\frac{\mathrm{i}\lambda_{ab}}{2}\cdot \Pi_{V_{(b,\beta)}}\tilde{\omega}\,\Pi_{V_{(a,\alpha)}}\in \mathfrak{u}(V),
        \end{equation}
    where $\lambda_{ab}=\lambda_{a}-\lambda_{b}$. But then it is easy to find its $\pi_{(a,\alpha)}^{(b,\beta)}$-preimage:
        \begin{equation}
        \label{preimage}
            [H,\omega]=\pi_{(a,\alpha)}^{(b,\beta)}(\mathrm{i}\lambda_{ab}\cdot \Pi_{V_{(a,\alpha)}}\tilde{\omega}\,\Pi_{V_{(b,\beta)}}-\mathrm{i}\lambda_{ab}\cdot\Pi_{V_{(b,\beta)}}\tilde{\omega}\,\Pi_{V_{(a,\alpha)}}).
        \end{equation}
        Then it follows that $[H,\omega]\in\mathcal{B}_{(a,\alpha),(b,\beta)}$. For $[S,\omega]$, the proof is similar.
    \end{proof}
\end{lemma}
\begin{lemma} \label{lemma:acyclic}
    Consider $\bigoplus_{(a,\alpha)\neq(b,\beta)}\mathcal{B}_{(a,\alpha),(b,\beta)}^\bullet$. This subcomplex is acyclic.
    \begin{proof}
        It suffices to show this componentwise. Consider $0$-cycle $\omega$ in $\mathcal{B}^{\bullet}_{(a,\alpha),(b,\beta)}$:
        \begin{eqnarray}
            \notag d\omega=c^{H}\cdot \left(\frac{\mathrm{i}\lambda_{ab}}{2}\cdot \Pi_{V_{(a,\alpha)}}\tilde{\omega}\,\Pi_{V_{(b,\beta)}}-\frac{\mathrm{i}\lambda_{ab}}{2}\cdot \Pi_{V_{(b,\beta)}}\tilde{\omega}\,\Pi_{V_{(a,\alpha)}}\right)+\\+\,c^{S}\cdot \left(\frac{\mathrm{i}\mu_{\alpha\beta}}{2}\cdot \Pi_{V_{(a,\alpha)}}\tilde{\omega}\,\Pi_{V_{(b,\beta)}}-\frac{\mathrm{i}\mu_{\alpha\beta}}{2}\cdot \Pi_{V_{(b,\beta)}}\tilde{\omega}\,\Pi_{V_{(a,\alpha)}}\right)=0.
        \end{eqnarray}
        However, $\lambda_{ab}$ and $\mu_{\alpha\beta}$ are not equal to zero simultaneously. Suppose that $\lambda_{ab}\neq 0$:
        \begin{equation}
            \Pi_{V_{(a,\alpha)}}\tilde{\omega}\,\Pi_{V_{(b,\beta)}}- \Pi_{V_{(b,\beta)}}\tilde{\omega}\,\Pi_{V_{(a,\alpha)}}=0.
        \end{equation}
        It follows that $\omega=0$ and, therefore,
        \begin{equation}
        H^{0}(\mathcal{B}^{\bullet}_{(a,\alpha),(b,\beta)},d)=0.
        \end{equation}

 Now, let us consider the $1$-cocycle $\omega=c^{H}\omega_{H}+c^{S}\omega_{S}$:
        \begin{equation}
            d\omega=c^{H}c^{S}[H,\omega_{S}]-c^{H}c^{S}[S,\omega_{H}]=0.
        \end{equation}
        Note that again $\lambda_{ab}$ and $\mu_{\alpha\beta}$ cannot both be zero at the same time. Therefore
        \begin{eqnarray}
            \notag\frac{\mathrm{i}\lambda_{ab}}{2}\cdot \Pi_{V_{(a,\alpha)}}\tilde{\omega}_{S}\,\Pi_{V_{(b,\beta)}}-\frac{\mathrm{i}\lambda_{ab}}{2}\cdot \Pi_{V_{(b,\beta)}}\tilde{\omega}_{S}\,\Pi_{V_{(a,\alpha)}}=\\=\,\frac{\mathrm{i}\mu_{\alpha\beta}}{2}\cdot \Pi_{V_{(a,\alpha)}}\tilde{\omega}_{H}\,\Pi_{V_{(b,\beta)}}-\frac{\mathrm{i}\mu_{\alpha\beta}}{2}\cdot \Pi_{V_{(b,\beta)}}\tilde{\omega}_{H}\,\Pi_{V_{(a,\alpha)}}.
        \end{eqnarray}
        Suppose that $\lambda_{ab}\neq 0$. Then it follows from this equation that 
        \begin{equation}
            \omega_{S}=\frac{\mu_{\alpha\beta}}{\lambda_{ab}}\cdot \omega_{H}.
        \end{equation}
        Then we should find such a $0$-chain $\Omega$ that
        \begin{eqnarray}
            \notag d\Omega=\omega=c^{H}\omega_{H}+\frac{\mu_{\alpha\beta}}{\lambda_{ab}}\cdot c^{S}\omega_{H}=\\\notag=c^{H}\cdot \left(\frac{1}{2}\cdot \Pi_{V_{(a,\alpha)}}\tilde{\omega}_{H}\,\Pi_{V_{(b,\beta)}}+\frac{1}{2}\cdot \Pi_{V_{(b,\beta)}}\tilde{\omega}_{H}\,\Pi_{V_{(a,\alpha)}}\right)+\\ +\,\frac{\mu_{\alpha\beta}}{\lambda_{ab}}\cdot c^{S}\cdot \left(\frac{1}{2}\cdot \Pi_{V_{(a,\alpha)}}\tilde{\omega}_{H}\,\Pi_{V_{(b,\beta)}}+\frac{1}{2}\cdot \Pi_{V_{(b,\beta)}}\tilde{\omega}_{H}\,\Pi_{V_{(a,\alpha)}}\right).
        \end{eqnarray}
        Then $\Omega$ can be chosen as follows:
        \begin{equation}
            \Omega = \frac{1}{2\mathrm{i}\lambda_{ab}}\cdot \Pi_{V_{(a,\alpha)}}\tilde{\omega}_{H}\,\Pi_{V_{(b,\beta)}}-\frac{1}{2\mathrm{i}\lambda_{ab}}\cdot \Pi_{V_{(b,\beta)}}\tilde{\omega}_{H}\,\Pi_{V_{(a,\alpha)}}.
        \end{equation}
        The fact that $d\Omega=\omega$ can be verified by direct calculation. The fact that the cochain $\Omega\in\mathcal{B}_{(a,\alpha),(b,\beta)}$ was shown in \eqref{preimage}. Then we obtained that
        \begin{equation}
        H^{1}(\mathcal{B}^{\bullet}_{(a,\alpha),(b,\beta)},d)=0.
        \end{equation}

        Finally, it remains to show that every $2$-cochain $\omega$ has a preimage. To do this, we will again use the fact that $\lambda_{ab}$ and $\mu_{\alpha\beta}$ are not equal to zero at the same time. Let $\lambda_{ab}\neq 0$ again, then as a $1$-cochain, we will consider
        \begin{equation}
            \Omega=c^{S}\Omega_{S}=c^{S}\cdot\left(\frac{1}{2\mathrm{i}\lambda_{ab}}\cdot \Pi_{V_{(a,\alpha)}}\tilde{\omega}\,\Pi_{V_{(b,\beta)}}-\frac{1}{2\mathrm{i}\lambda_{ab}}\cdot \Pi_{V_{(b,\beta)}}\tilde{\omega}\,\Pi_{V_{(a,\alpha)}}\right).
        \end{equation}
        Then $d\Omega=c^{H}c^{S}[H,\Omega_{S}]$ and using the fact that $\Omega_{S}\in\mathcal{B}_{(a,\alpha),(b,\beta)}$ we obtain
        \begin{equation}
            d\Omega=c^{H}c^{S}\cdot\left(\frac{1}{2}\cdot \Pi_{V_{(a,\alpha)}}\tilde{\omega}\,\Pi_{V_{(b,\beta)}}+\frac{1}{2}\cdot \Pi_{V_{(b,\beta)}}\tilde{\omega}\,\Pi_{V_{(a,\alpha)}}\right)=\omega.
        \end{equation}
        It finally follows from this that $H^{2}(\mathcal{B}^{\bullet}_{(a,\alpha),(b,\beta)},d)=0$.
        \end{proof}
\end{lemma}
\begin{lemma} \label{prop:zero-diff}
    Let $\mathcal{Z}^{\bullet}=\bigoplus_{(a,\alpha)}\mathcal{B}_{(a,\alpha),(a,\alpha)}^{\bullet}$. Then $d|_{\mathcal{Z}^{\bullet}}=0$.
    \begin{proof}
        This immediately follows from the fact that for every $\mathcal{B}_{(a,\alpha),(a,\alpha)}$
        \begin{equation}
            \lambda_{aa}=\mu_{\alpha\alpha}=0.
        \end{equation}
        Therefore, the complex becomes a sequence of vector spaces with zero maps:
        \begin{equation}
            0\rightarrow\mathcal{Z}\xlongrightarrow{0}(c^H\oplus c^S)\otimes\mathcal{Z}\xlongrightarrow{0}(c^H c^S)\otimes\mathcal{Z}\rightarrow 0.
        \end{equation}
        The cohomology groups are simply the vector spaces themselves.
    \end{proof}
\end{lemma}
\begin{theorem} \label{thm:cohomologies}
    The Chevalley-Eilenberg cohomology of the two-dimensional abelian Lie algebra $\mathfrak{g}\cong\mathbb{R}^2$ with coefficients in the module $\mathfrak{u}(V)$ is given by:
    \begin{eqnarray}
        \notag
        H^0(\mathbb{R}^2, \mathfrak{u}(V)) \cong \mathcal{Z}; \\
        \notag
            H^1(\mathbb{R}^2, \mathfrak{u}(V)) \cong \mathcal{Z} \oplus \mathcal{Z}; \\
            H^2(\mathbb{R}^2, \mathfrak{u}(V)) \cong \mathcal{Z},
    \end{eqnarray}
    where $\mathcal{Z}=\{A\in\mathfrak{u}(V)\ |\  [H,A]=[S,A]=0\}$ is the commutant of the representation.
    \begin{proof}
        This immediately follows from the lemmas~\ref{lemma:decomposition}, ~\ref{lemma:acyclic} and ~\ref{prop:zero-diff}.
    \end{proof}
\end{theorem}
\begin{corollary}
    Any $1$-cocycle $c^{H}\delta^{(1)}H+c^{S}\delta^{(1)}S$ is cohomologically equivalent to a cocycle $c^{H}\delta^{(1)}\tilde{H}+c^{S}\delta^{(1)}\tilde{S}$ satisfying the relations
    \begin{equation}
    \begin{aligned}
    \label{1-cocycle_advanced_eq}
        [H, \delta^{(1)}\tilde{S}]=0,\quad [S,\delta^{(1)}\tilde{H}]=0.
    \end{aligned}
    \end{equation}
    Thus, all infinitesimal deformations are exhausted by the solutions to \eqref{1-cocycle_advanced_eq}.
\end{corollary}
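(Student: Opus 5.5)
We use the decomposition of Lemma~\ref{lemma:decomposition} to split the given cocycle, and then apply Lemmas~\ref{lemma:acyclic} and~\ref{prop:zero-diff} to each part. Write the $1$-cocycle $\omega=c^{H}\delta^{(1)}H+c^{S}\delta^{(1)}S$ as $\omega=\omega_{\mathcal{Z}}+\omega_{\perp}$. Here $\omega_{\mathcal{Z}}$ is the component in $\mathcal{Z}^{\bullet}=\bigoplus_{(a,\alpha)}\mathcal{B}^{\bullet}_{(a,\alpha),(a,\alpha)}$, and $\omega_{\perp}$ is the component in $\bigoplus_{(a,\alpha)\neq(b,\beta)}\mathcal{B}^{\bullet}_{(a,\alpha),(b,\beta)}$. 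Concretely, $\omega_{\mathcal{Z}}$ is obtained by applying $\sum_{(a,\alpha)}\pi_{(a,\alpha)}^{(a,\alpha)}$ to each coefficient, i.e.\ by taking the block-diagonal part $\sum_{(a,\alpha)}\Pi_{V_{(a,\alpha)}}x\,\Pi_{V_{(a,\alpha)}}$. By Lemma~\ref{lemma:decomposition} the differential preserves this splitting, so $d\omega=0$ implies $d\omega_{\mathcal{Z}}=0$ and $d\omega_{\perp}=0$ separately.

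Next, Lemma~\ref{lemma:acyclic} says the off-diagonal subcomplex is acyclic. So there is a $0$-cochain $\Omega\in\bigoplus_{(a,\alpha)\neq(b,\beta)}\mathcal{B}_{(a,\alpha),(b,\beta)}$ with $d\Omega=\omega_{\perp}$, and it can be written componentwise by the explicit formula from that proof: divide by $\mathrm{i}\lambda_{ab}$ when $\lambda_{ab}\neq0$, and by $\mathrm{i}\mu_{\alpha\beta}$ otherwise. Therefore $\omega-d\Omega=\omega_{\mathcal{Z}}$, and we set
\[
c^{H}\delta^{(1)}\tilde H+c^{S}\delta^{(1)}\tilde S:=\omega_{\mathcal{Z}}.
\]
Explicitly, $\delta^{(1)}\tilde H=\delta^{(1)}H-[H,\Omega]$ and $\delta^{(1)}\tilde S=\delta^{(1)}S-[S,\Omega]$.

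It remains to check the relations \eqref{1-cocycle_advanced_eq}. Both $\delta^{(1)}\tilde H$ and $\delta^{(1)}\tilde S$ lie in $\mathcal{Z}$, which is block diagonal with respect to the joint eigenspaces $V_{(a,\alpha)}$. On each such block $H$ and $S$ act as scalars, so both operators commute with both $H$ and $S$. This gives $[H,\delta^{(1)}\tilde S]=0$ and $[S,\delta^{(1)}\tilde H]=0$, and in fact also $[H,\delta^{(1)}\tilde H]=[S,\delta^{(1)}\tilde S]=0$. Conversely, any pair satisfying \eqref{1-cocycle_advanced_eq} is a cocycle, because the cocycle condition $[H,\delta S]=[S,\delta H]$ holds trivially with both sides zero. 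This gives the final sentence of the statement, which is consistent with Theorem~\ref{thm:cohomologies}, $H^1\cong\mathcal{Z}\oplus\mathcal{Z}$.

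The main point to handle with care is the infinite-dimensional, possibly infinite direct sum. Lemma~\ref{lemma:acyclic} is proved componentwise, so the primitives $\Omega_{(a,\alpha),(b,\beta)}$ must be assembled into a single operator $\Omega$. I would treat the sum formally, as the paper does, in the algebraic direct-sum sense of Lemma~\ref{lemma:decomposition}. I would also check that each block primitive is anti-Hermitian, so that $\Omega\in\mathfrak{u}(V)$; this holds because each $\pi$-image of the anti-Hermitian form used in \eqref{preimage} is anti-Hermitian. Convergence issues, such as small denominators $\lambda_{ab}$, are outside the algebraic framework and would not be addressed.
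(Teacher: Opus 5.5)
Your proposal is correct and takes essentially the paper's route. The paper states this corollary as an immediate consequence of Theorem~\ref{thm:cohomologies}: split via Lemma~\ref{lemma:decomposition} into the acyclic off-diagonal part (Lemma~\ref{lemma:acyclic}) and the block-diagonal part $\mathcal{Z}^{\bullet}$, where $d=0$ (Lemma~\ref{prop:zero-diff}), and take the representative in $\mathcal{Z}^{\bullet}$; you unpack exactly this, using the explicit primitive $\Omega$, and your caveat about assembling infinitely many block primitives with small denominators $\lambda_{ab}$ is one the paper also leaves at the formal level.
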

\begin{example}
    As an example, let us consider the quantum mechanics of two particles on a circle. That is, we take the state space to be $V = L^2(\mathrm{S}^1) \otimes L^2(\mathrm{S}^1)$. The elements of this space are square integrable functions $\psi(q_{1},q_{2})$ periodic (with period $2\pi$) in both arguments. Consider the operators:
    \begin{equation}
    \begin{aligned}
        \hat{H}=p_{1}^{2}+p^{2}_{2},\quad \hat{S}=p_{1}+p_{2},
    \end{aligned}
    \end{equation}
    where $p_{i}=-\mathrm{i}\frac{\partial}{\partial q_{i}}$ are the momentum operators. Clearly, these operators commute with each other. The eigenfunctions of $\hat{H}$ and $\hat{S}$ have the form
    \begin{equation}
        \psi_{n_{1},n_{2}}(q_{1},q_{2})=\frac{1}{2\pi}\,e^{\mathrm{i}n_{1}q_{1}+\mathrm{i}n_{2}q_{2}}.
    \end{equation}
    As a 1-cocycle deforming this system, let us consider
    \begin{equation}
        \delta^{(1)}\hat{H}=\sin(q_{1}-q_{2}),\quad \delta^{(1)}\hat{S}=p_{1}-p_{2}.
    \end{equation}
    Let us try to extend this deformation to the second order of perturbation theory:
    \begin{equation}
    \begin{aligned}
    \label{example_eq}
        [\hat{H},\delta^{(2)}\hat{S}]+[\delta^{(2)}\hat{H},\hat{S}]+\underbrace{2\mathrm{i}\cos(q_{1}-q_{2})}_{[\delta^{(1)}\hat{H},\,\delta^{(1)}\hat{S}]}=0.
    \end{aligned}
    \end{equation}
    Now we will show that this equation is unsolvable using the logic from the proof of theorem~\ref{thm:cohomologies}. Namely, let us consider the subspace spanned by the wavefunctions $\psi_{0,1}, \psi_{0,-1}, \psi_{1,0}$, and $\psi_{-1,0}$. This is an eigenspace of $\hat{H}$ with eigenvalue $1$. The matrix of the operator $\hat{S}$ in this basis has the form $\mathrm{diag}(1,-1,1,-1)$. Acting with equation \eqref{example_eq} on the function $\psi_{0,1}$ and computing the inner product with $\psi_{1,0}$. Since these functions are eigenfunctions of $\hat{H}$ and $\hat{S}$ with the same eigenvalues, the first two terms vanish. The third term gives
    \begin{equation}
        \frac{2\mathrm{i}}{4\pi^{2}}\int_{0}^{2\pi}\mathrm{d}q_{1}\int_{0}^{2\pi}\mathrm{d}q_{2}\,e^{\mathrm{i}(q_{2}-q_{1})}\cos(q_{1}-q_{2})=\mathrm{i}.
    \end{equation}
    Therefore, equation \eqref{example_eq} is unsolvable. Thus, in this example, the $2$-cocycle
    \begin{equation}
        c^{H}c^{S}[\delta^{(1)}\hat{H},\delta^{(1)}\hat{S}]\,
    \end{equation}
    has a non-zero cohomology class.
\end{example}
\section{Graded Lie algebra structure on $H^{\bullet}(\mathbb{R}^{2},\mathfrak{u}(V))$}
\begin{proposition}
    \label{lemma:no_obs}
    There are no higher orders obstructions. The $L_{\infty}$-algebra structure on the cohomology $H^{\bullet}(\mathbb{R}^{2}, \mathfrak{u}(V))$ reduces to a differential graded Lie algebra (dgLa) structure with a vanishing differential, meaning that all higher obstruction maps $\mu_{n}$ vanish for $n \geq 3$.
    \begin{proof}
    According to Theorem ~\ref{thm:cohomologies}, the Chevalley-Eilenberg cohomology groups are isomorphic to the subcomplex $\mathcal{Z}^{\bullet} = \bigoplus_{(a, \alpha)} \mathcal{B}_{(a, \alpha), (a, \alpha)}^{\bullet}$. As established in Lemma \ref{prop:zero-diff}, the differential $d$ vanishes identically on this subcomplex ($d|_{\mathcal{Z}^{\bullet}} = 0$) because the corresponding eigenvalues satisfy $\lambda_{aa} = \mu_{\alpha\alpha} = 0$.
    
    In the context of deformation theory, the $L_{\infty}$-structure on cohomology is induced from the structure of the initial complex via the homotopy transfer theorem \cite{Arvanitakis:2020homotopy}. Since the complex decomposes into a direct sum of an acyclic part and a part with zero differential $\mathcal{Z}^{\bullet}$, we can choose a homotopy transfer where the only non-trivial higher bracket is the binary commutator inherited from $\mathfrak{u}(V)$. 
    
    Specifically, the obstruction maps $\mu_n$ for $n \geq 3$ involve higher $l_n$ brackets which vanish when the differential on the representative subcomplex is zero and no further corrections to the generators are required beyond the second order. Thus, the only possible obstruction to the deformation $(\hat{H}(t), \hat{S}(t))$ arises in the second order.
    \end{proof}
\end{proposition}
\begin{remark}
The Proposition~\ref{lemma:no_obs} has an interesting physical consequence: if the original Hamiltonian $\hat{H}$ or the symmetry generator $\hat{S}$ had no degenerate eigenvalues, the anomaly upon perturbation of such a system would be zero.
\end{remark}
\begin{proposition}
    The only perturbative anomaly in quantum mechanics is realized in the second order of perturbation theory and satisfies a quadratic equation.
    \begin{proof}
        As shown earlier, the deformation space is the commutant of the representation of the original algebra. It is isomorphic to the direct sum of the A-type Lie algebras $\bigoplus_{(a,\alpha)}\mathfrak{u}(n_{(a,\alpha)})$, where $n_{(a,\alpha)}=\mathrm{dim}V_{(a,\alpha)}$. So we can decompose $\delta^{(1)} H$ and $\delta^{(1)} S$ into a basis $e_{i}$ of the commutant $\mathcal{Z}$: $\delta^{(1)} H=h^{i}e_{i}, \delta^{(1)} S=s^{i}e_{i}$. We will denote by $f_{ij}^{k}$ the structure constants of the commutant $\mathcal{Z}$ in the basis $e_{i}$. Then, deforming along the basis directions in the first order, we obtain the following equation in the second order of perturbation theory:
        \begin{equation}
            [\delta^{(2)}\hat{S},\hat{H}]+[\hat{S},\delta^{(2)}\hat{H}]=f_{ij}^{k}\,e_{k}.
        \end{equation}
        The obstruction is proportional to $f_{ij}^{k}$. It has a quadratic equation that comes from the dgLa structure on $H^{\bullet}(\mathbb{R}^{2},\mathfrak{u}(V))$, which is essentially the Jacobi identity:
        \begin{equation}
            f_{ij}^{l} f_{lk}^{m} + f_{jk}^{l} f_{li}^{m} + f_{ki}^{l} f_{lj}^{m} = 0.
        \end{equation}
        Therefore, the anomaly in this example is limited to the second order of perturbation theory and, according to general ideology, satisfies a quadratic equation.
     \end{proof}
\end{proposition}
\section{Conclusion}
In this paper, we examine the simplest example of an anomaly that arises as a cohomological obstruction. Next, we plan to implement the cohomological program in a crucial example: conformal symmetry in two-dimensional quantum field theory.

Another interesting special case is where the anomaly arises due to deformation along the Planck constant $\hbar$. Then we can consider a quantum mechanical example in which deformation occurs simultaneously in two directions: the Planck constant $\hbar$ is deformed and the coupling constant $t$ is deformed. The deformation of the coupling constant $t$ at $\hbar=0$ corresponds to a transition to an action that is classically invariant with respect to the deformed symmetry. On the other hand, the deformation of the Planck constant $\hbar$ at $t=0$ corresponds to the existence of quantum symmetry in the initial theory. Then, simultaneous deformation along these two directions can lead to a \emph{mixed-type} anomaly (with parameter t$\hbar $), which we interpret as a one-loop anomaly. We intend to construct an example of this phenomenon in one of our next papers.

\section*{Data availability statement}
Data availability is not applicable to this article as no data were created or analyzed during this study. All theoretical results and relevant calculations are included within the published text.

\section*{Conflict of interest statement}
On behalf of all authors, the corresponding author states that there is no conflict of interest.

\section*{Acknowledgements}
We are grateful to Vyacheslav Lysov for helpful discussions. We are grateful to the Shanghai Institute for Mathematics and Interdisciplinary Sciences for providing us a workspace. The first author is supported by the Ministry of Science and Higher Education of the Russian Federation (Agreement No. 075-15-2025-013).

\bibliographystyle{alphaurl}
\bibliography{refs.bib}

\end{document}